\newtheorem{definition}{Definition}
\newtheorem{thm}{Theorem}
\begin{document}

\title{Incentivize Diffusion with Fair Rewards}

\author{Wen Zhang \and Dengji Zhao \and Yao Zhang\institute{ShanghaiTech University,
		China, email: \{zhangwen, zhaodj, zhangyao1\} @shanghaitech.edu.cn}}

\maketitle
\bibliographystyle{ecai}

\begin{abstract}
  This paper studies a sale promotion mechanism design problem on a social network, where a node (a seller) sells one item to the other nodes on the network to maximize her revenue. However, the seller does not know other nodes except for her neighbours and her neighbours have no incentive to promote the sale. Hence, the goal is to design an auction mechanism such that the seller's neighbours are incentivized to invite their neighbours to join the auction, while the seller's revenue is guaranteed to increase. This is not achievable with traditional mechanisms. One solution has been proposed recently by carefully designing a reward scheme for the nodes who have invited others. However, the solution only gives rewards to some cut-points of the network, but cut-points rarely exist in a well-connected network, which actually disincentivizes nodes' participation. Therefore, we propose another novel mechanism to reward more related participants with fairer rewards, and the seller's revenue is not reduced.
\end{abstract}

\section{Introduction}
Marketing is a vital element in the development of the economy. Due to limited personal social connections, sellers often seek various kinds of ways to enlarge the market and attract more potential buyers. Traditionally, they tend to sell products via online shopping platforms such as Amazon and eBay. However, the platforms cannot always improve the sellers' revenue because it may cost a large amount of money for using the platforms' services such as advertisement.

Alternatively, a seller can hold an auction among her neighbours using the classic auction protocol Vickrey-Clarke-Groves (VCG)~\cite{clarke1971multipart,vickrey1961counterspeculation,groves1973incentives} with an optimal reserve price~\cite{myerson1981optimal}, which optimizes the seller's revenue locally. To further increase the seller's revenue, diffusion mechanisms on social networks have been proposed to attract more buyers. Li \textit{et al.} proposed the first such auction on social networks, called information diffusion mechanism (IDM)~\cite{DBLP:conf/aaai/LiHZZ17}. IDM can incentivize the seller's neighbours to propagate the auction information to their neighbours, and these newly informed neighbours will do the same. Eventually, all potential buyers on the network are informed, which ultimately improves the seller's revenue. To achieve this goal, IDM distributes dedicated rewards to the cut-points from the seller to the winner who receives the item. However, according to the theorem of small-world networks~\cite{amaral2000classes}, the chance for a node to be a cut-point in a well-connected network is very low. Hence, only a very small proportion of the buyers on the network can benefit from the mechanism, which disincentivizes their participation.

Therefore, in this paper, we propose another novel diffusion mechanism, which distributes the rewards to all the related buyers not only the cut-points on the paths to the winner. In addition to the cut-points, we also pay a group of buyers who are not cut-points but can disconnect the winner from the seller together with other non-cut-points. They are less important compared to the cut-points, but still critical in terms of connecting the seller and the winner. Under this mechanism, we still ensure that buyers will report their truthful valuation for the item and invite all their neighbours without a predefined reward. More importantly, we tackled the challenge without sacrificing the seller's revenue, i.e., the seller's revenue in our mechanism is at least as good as that in the previous mechanism, which incentivizes the seller to apply our mechanism.

Much work has been devoted to social networks. Granovetter first analyzed the social networks and put emphasis on the strength of weak ties, which to some extent confirms the low chance to be a cut-point~\cite{granovetter1977strength}. Matthew studied the models and techniques for analyzing
social and economic networks~\cite{jackson2010social}. Their work showed the importance of networks for social economy. Also, there is some work concerning auctions on networks. Wang and Chiu presented a recommendation system to calculate the level of recommendation for online auctions using social network analysis~\cite{wang2008recommending}. Pandit \textit{et al.} designed a system based on social networks to avoid auction fraud~\cite{pandit2007netprobe}. They mainly focused on the real-world applications without considering the mathematical properties of the mechanisms, while we look at the game theoretical properties of the mechanism on networks.

There exist some closely related work on information diffusion~\cite{easley2010networks,scott1988social}. For instance, Li \textit{et al.} gave a class of mechanisms similar to IDM and proved that IDM gives the lowest revenue in the class~\cite{DBLP:conf/ijcai/LiHZY19}. However, they still focused on distributing rewards to the cut-points while our mechanism does not belong to this class and our goal is to give rewards to all the buyers who have made a contribution to the sale. Emek \textit{et al.} studied mechanism design problem for multi-level marketing~\cite{emek2011mechanisms}. In their setting, all the nodes in a referral tree should purchase the product and they focused on false-name attacks. However, we do not have the constraint and aim for selling one item with more participants on the network to increase the seller's revenue.

In our mechanism, in order to pay the non-cut-points, we have applied some techniques from the redistribution mechanism design literature. Many redistribution mechanisms have been proposed to redistribute the surplus from the seller back to the buyers~\cite{Cavallo:2006:ODM:1160633.1160790,DBLP:journals/geb/GuoC09,DBLP:conf/aaai/Guo11}. The objective of their redistribution mechanisms is to satisfy the budget balance property, which the goal is to give back the payments to the participants as much as possible. However, in our setting, we are aiming to improve the seller's revenue through getting more potential buyers. Thus, we only borrow the idea of redistribution to reward more buyers in our mechanism while improving the seller's revenue (budget-balance would not give any revenue to the seller).

The remainder of the paper is organized as follows. In Section~\ref{section:pre}, we describe the preliminaries of the problem. In Section~\ref{section:FDM}, we define the basic concepts and introduce our novel mechanism in detail, and show the advantages of our mechanism compared to the previous work. Then we show the key properties of our mechanism in Section~\ref{section:property}. Finally, we conclude and discuss future work in Section~\ref{section:con}.

\section{Preliminaries}
\label{section:pre}
We consider a market where a seller $s$ sells an item in a social network. The network is modelled as an undirected graph $G=(V,E)$, where $V = N\cup \{s\}=\{1,2,\dots,n\}\cup\{s\}$ denotes the set of all nodes of the network and $E$ denotes the set of all the edges. Each $i\in N$ represents a potential buyer of the item, and she has a set of neighbours $r_i\subseteq V$. Node $j\in r_i$ if there is an edge $e_{ij}\in E$ connecting buyer $i$ and buyer $j$. Each buyer $i\in V$ has a depth $d_i>0$ representing the length of the shortest path from the seller to $i$.
Each buyer $i\in V$ has a private valuation $v_i\geq0$ for receiving the item. We assume that the seller's valuation for the item is zero. Figure \ref{network} shows an example of the social network, where the letter beside each node is the label of a buyer and the value in each node is the buyer's private valuation for receiving the item.

\begin{figure}[htbp]
	\centering
	\includegraphics[width=0.6\columnwidth]{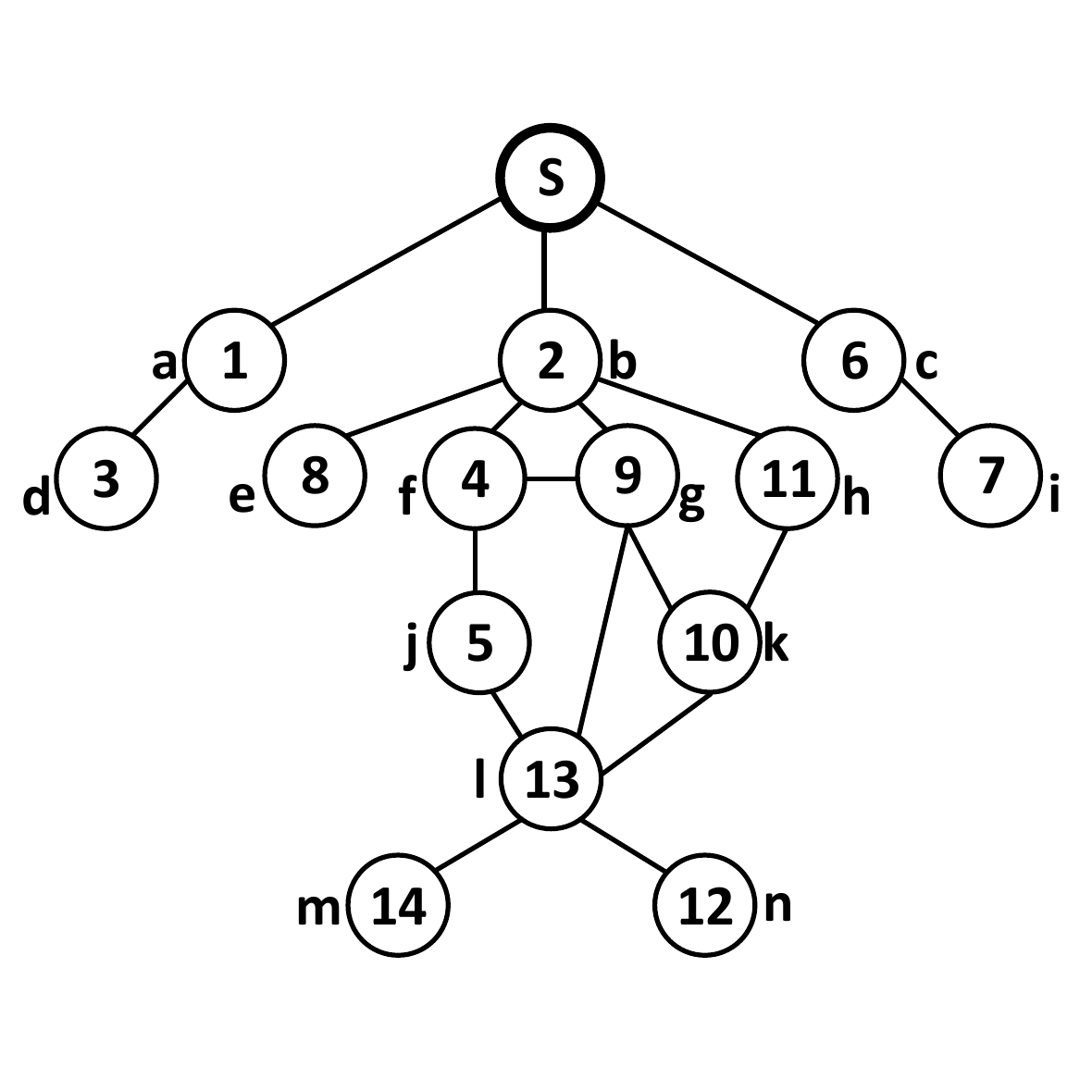}
	\caption{A social network example}
	\label{network}
\end{figure}

Traditionally, since the seller $s$ has no prior knowledge about the network, she can only sell the item among her neighbours $r_s$ without doing any advertising. In order to gain more revenue, the seller has to invite more potential buyers with higher valuations to join the sale. This can be achieved by asking her neighbours to invite their neighbours to join the sale. However, they would not invite their neighbours to compete with them without any incentive. Hence, we build incentives to tackle this challenge in this paper.

In this paper, we propose a novel diffusion mechanism, which aims to reward all buyers who make a contribution for inviting the winner. Under this mechanism, all the buyers are incentivized not only to report their private valuations for the item to the mechanism but also to propagate the sale information to all their neighbours voluntarily without prepaid rewards. 

For each buyer $i\in N$, let $\theta_i=(v_i,r_i)$ be $i$'s type and the type profile of all the buyers is denoted as $\theta=(\theta_1,\dots,\theta_n)$. Let $\theta_{-i}$ be the type profile for buyers except $i$, and we can also represent the type profile as $\theta=(\theta_{-i},\theta_i)$. Let $\Theta_i$ be the type space for buyer $i$ and $\Theta=(\Theta_1,\dots,\Theta_n)=(\Theta_{-i},\Theta_i)$ be the type profile space for all buyers.

In the mechanism, each buyer $i\in N$ is required to report her type. Let $a_i=(v_i',r_i')$ be $i$'s reported type where $v_i'$ represents the valuation she reported and $r_i'$ represents the neighbours she has invited. If she is not involved in the mechanism, let $a_i=nil$.

\begin{definition}
	We say an action profile $a$ is \textbf{feasible} if for each buyer $i\in N$ with $a_i\not=nil$, there must exist at least one path $P_{si}=(s,k_1,\dots,k_m,i)$ from $s$ to $i$, where $k_1\in r_s$, $i\in r_{k_m}'$ and $k_{t+1}\in r_{k_t}'$ for $1\leq t< m$. In other words, without invitation, the buyers cannot join to the sale. Let $\mathcal{F}(\theta)\subseteq \Theta$ be the set of all feasible action profiles.
\end{definition}

In reality, if a buyer is not invited, the mechanism will not observe any action from the buyer. To simplify the notations, instead of removing these uninformed buyers from the reports, we use "nil" action to represent their absence. Hence, feasibility always holds, and it will simplify the following definitions.

\begin{definition}
	A diffusion mechanism  $\mathcal{M}$ on the social network is defined by an allocation policy $\pi=(\pi_1, \pi_2,\dots, \pi_n)$ and a payment policy $p=(p_1,p_2,\dots, p_n)$, where $\pi_i:\Theta\rightarrow \{0,1\}$ and $p_i:\Theta\rightarrow\textbf{R}$.
\end{definition}

Given an action profile $a=(a_1,\dots,a_n)\in \mathcal{F}(\theta)$, the payment policy $p(a)=(p_1(a),\dots,p_n(a))$ represents the money each buyer is asked to pay. For buyer $i\in N$, if $p_i(a)\geq0$, then she needs to pay $p_i$ to the seller, and if $p_i(a)<0$, she will receive $|p_i(a)|$ from the seller. The allocation policy $\pi(a)=(\pi_1(a),\dots,\pi_n(a))$ represents the item allocation, and we have

$$\pi_{i}(a)=
\begin{cases}
1& \text{if buyer $i$ receives the item}\\
0& \text{if buyer $i$ does not receive the item}
\end{cases}$$

Since there is only one item to sell, we say the allocation $\pi$ is feasible if no more than one buyer with $a_i\not=nil$ receives the item under all feasible action profiles. We will only consider feasible allocation policies in the following discussion.

\begin{definition}
	Given a feasible action profile $a\in\mathcal{F}(\theta)$ and a feasible allocation $\pi$, the social welfare of allocation $\pi(a)$ is $\sum_{i\in N}\pi_{i}(a)v_i'$.
\end{definition}

Under the diffusion mechanism $\mathcal{M}=(\pi,p)$, we assume that there is no cost for a buyer to spread the sale information to her neighbours\footnote{If we consider cost for information diffusion, we will not be able to guarantee revenue improvement for the seller.}. Thus, for buyer $i\in N$ of type profile $\theta_i$, given a feasible action profile $a$ of all buyers, $i$'s utility is defined as
\begin{center}
	$u_i(\theta_i,a)=\pi_i(a)v_i-p_i(a)$
\end{center}
We say a diffusion mechanism is individually rational if the utility of each buyer involved is non-negative as long as she reports the valuation truthfully no matter how many neighbours she invites to join the mechanism. Notice that the definition does not rely on diffusion as we do not want to force people to invite others to guarantee a non-negative gain.

\begin{definition}
	A diffusion mechanism $\mathcal{M}=(\pi,p)$ is \textbf{individually rational} (IR) if $u_i(\theta_i,a)\geq0$, where $a_i=(v_i, r_i')$ for all $i\in N$, all $\theta\in\Theta$ and all $a\in \mathcal{F}(\theta)$.
\end{definition}

Traditionally, if all the buyers are willing to report their truthful valuations for the item, we say the mechanism satisfies the property of incentive compatibility or truthfulness. However, in our mechanism, buyers also need to invite their neighbours. Thus, we want to incentivize buyers not only to report their truthful valuations but also to invite all their neighbours. Therefore, we define incentive compatibility as follows.

\begin{definition}
	A diffusion mechanism $\mathcal{M}=(\pi,p)$ is \textbf{incentive compatible} (IC) if $u_i(\theta_i,(a_i,a_{-i}))\geq u_i(\theta_i,(a_i',a_{-i}'))$, for all $i\in N$, all $\theta\in\Theta$ and all $(a_i,a_{-i})\in \mathcal{F}(\theta)$ such that $a_i=\theta_i$ and for all $j\not=i$, $a_j'=a_j$ if $j$ is still connected to the seller when only $i$'s action is changed from $a_i$ to $a_i'$, otherwise $a_j'=nil$.
\end{definition}

In the following section, we will introduce a novel diffusion mechanism rewarding all the related buyers who make a contribution for inviting the winner, which satisfies the properties of IR and IC. We further prove that the revenue of the seller under our mechanism is higher than that of the traditional VCG in which the seller sells the item among her neighbours, and also higher than that of the previous work.

\section{Fair Diffusion Mechanism}
\label{section:FDM}
In this section, we will introduce our advanced mechanism called the fair diffusion mechanism (FDM). This mechanism aims to distribute rewards to all the buyers (not only the cut-points) who contributed to connect the winner with IR and IC guaranteed, which is achieved without sacrificing the seller’s revenue (even better than the previous work~\cite{DBLP:conf/aaai/LiHZZ17}).

Before we introduce our mechanism, we need some additional definitions.

\begin{definition}
	Given a feasible action profile $a\in\mathcal{F}(\theta)$, for each buyer $i\in N$, if there exists no path from the seller to $i$ without the participation of a set $D_i\subseteq N$, we say $D_i$ is a \textbf{cut set} of buyer $i$. If there is no proper subset $D_i'\subset D_i$ which is also a cut set of $i$, we say $D_i$ is a \textbf{minimal cut set} of buyer $i$.
\end{definition}

The cut sets of a buyer are the buyers who can separate the buyer from the seller and all the minimal cut sets can be induced from the cut sets. For example, in Figure~\ref{chart:node_type}, $\{b,g\}$,$\{d,f,g,h\}$ and $\{l,k\}$ are three cut sets of buyer $l$, while $\{b\}$, $\{f,g,h\}$ and $\{l\}$ are three minimal cut sets of $l$. 

\begin{definition}
	Given a feasible action profile $a\in\mathcal{F}(\theta)$, for each buyer $i,j\in N$, we say $j$ is a \textbf{critical ancestor} of $i$ if $j$ belongs to a minimal cut set of buyer $i$.
\end{definition}

\begin{definition}
	Given a feasible action profile $a\in\mathcal{F}(\theta)$, for each buyer $i,j\in N$, we say $j$ is a \textbf{strong critical ancestor} of $i$ if $j$ alone forms a minimal cut set of $i$. We say $j$ is a \textbf{weak critical ancestor} of $i$ if $j$ is a critical ancestor, but not strong critical ancestor of $i$.
\end{definition}

Intuitively, for each buyer $i\in N$, her critical ancestors are those who are on some simple path from the seller to $i$. Strong critical ancestors are the cut points from the seller to $i$, while weak critical ancestors are those who connect strong critical ancestors. All these critical ancestors have a contribution to connect the seller and buyer $i$. In Figure~\ref{chart:node_type}, all the colored nodes are the critical ancestors of buyer $m$, where nodes in orange are strong critical ancestors and nodes in yellow are weak critical ancestors.

\begin{definition}
	Given a feasible action profile $a\in\mathcal{F}(\theta)$, for each buyer $i,j\in N$, we say $j$ is $i$'s \textbf{critical descendant} if $i$ is a strong critical ancestor of $j$. Let $V_i=\{j|\text{ $j$ is i's critical descendant}, j\in N\}$ be the \textbf{critical descendant set} of buyer $i$. Similarly, for any set $K\subseteq N$, we say $j$ is $K$'s critical descendant if $K$ is a cut set of $j$. Let $V_K=\{j|\text{ $j$ is K's critical descendant}, j\in N\}$ be the critical descendant set of the set $K$.
\end{definition}
We can easily observe that on a social network, if a buyer $i$ quits the mechanism, her critical descendant set will not be involved in the mechanism. Here, we use the notation $N_{-i}$ to represent the vertex set in the new network without the participation of $i$, which is equivalent to $N\setminus V_i$. Similarly, $N_{-K}=N\setminus \bigcup_{i\in K}V_i$. Take Figure~\ref{chart:dominant_set} as an example, as all the red nodes cannot be involved in the mechanism without the participation of buyer $b$, they are critical descendant set of $b$. Thus, $V_{b}=\{b,e,f,g,h,j,k,l,m,n\}$ and $N_{-b}=\{a,c,d,i\}$.

\begin{definition}
	Given a feasible action profile $a\in\mathcal{F}(\theta)$, for each buyer $i\in N$, let $C_i$ be the \textbf{strong critical ancestor sequence} of $i$, denoted by $C_i=\{c_1^i,c_2^i,\cdots,c_k^i\}$, where $c_k^i=i$. Each $c_j^i\in C_i$ is a strong critical ancestor of buyer $i$ and the order is determined by the relation of depth $d_{c_1^i}<d_{c_2^i}<\cdots<d_{c_k^i}$.
\end{definition}

To simplify the description, let $C=\{c_1,c_2,\cdots,c_h\}$ be the strong critical ancestor sequence of the highest bidder $h$ among all the buyers on the network (with random tie-breaking). 

\begin{definition}
	Given a feasible action profile $a\in\mathcal{F}(\theta)$, for each $c_i,c_{i+1}\in C$, let $M_{c_ic_{i+1}}$ be the \textbf{weak critical ancestor set} between $c_i$ and $c_{i+1}$, denoted by $M_{c_ic_{i+1}}=\{m_{c_ic_{i+1}}^1,m_{c_ic_{i+1}}^2,\cdots,m_{c_ic_{i+1}}^k\}$, where $v_{m_{c_ic_{i+1}}^1}\geq v_{m_{c_ic_{i+1}}^2}\geq\cdots\geq v_{m_{c_ic_{i+1}}^k}$. Each $m_{c_ic_{i+1}}^j\in M_{c_ic_{i+1}}$ is a weak critical ancestor of buyer $i$, who is on some simple path from $c_i$ to $c_{i+1}$.
\end{definition}

\begin{figure}[t]
	\centering
	\subfigure[]{%
		\label{chart:node_type}%
		\includegraphics[width=0.5\columnwidth]{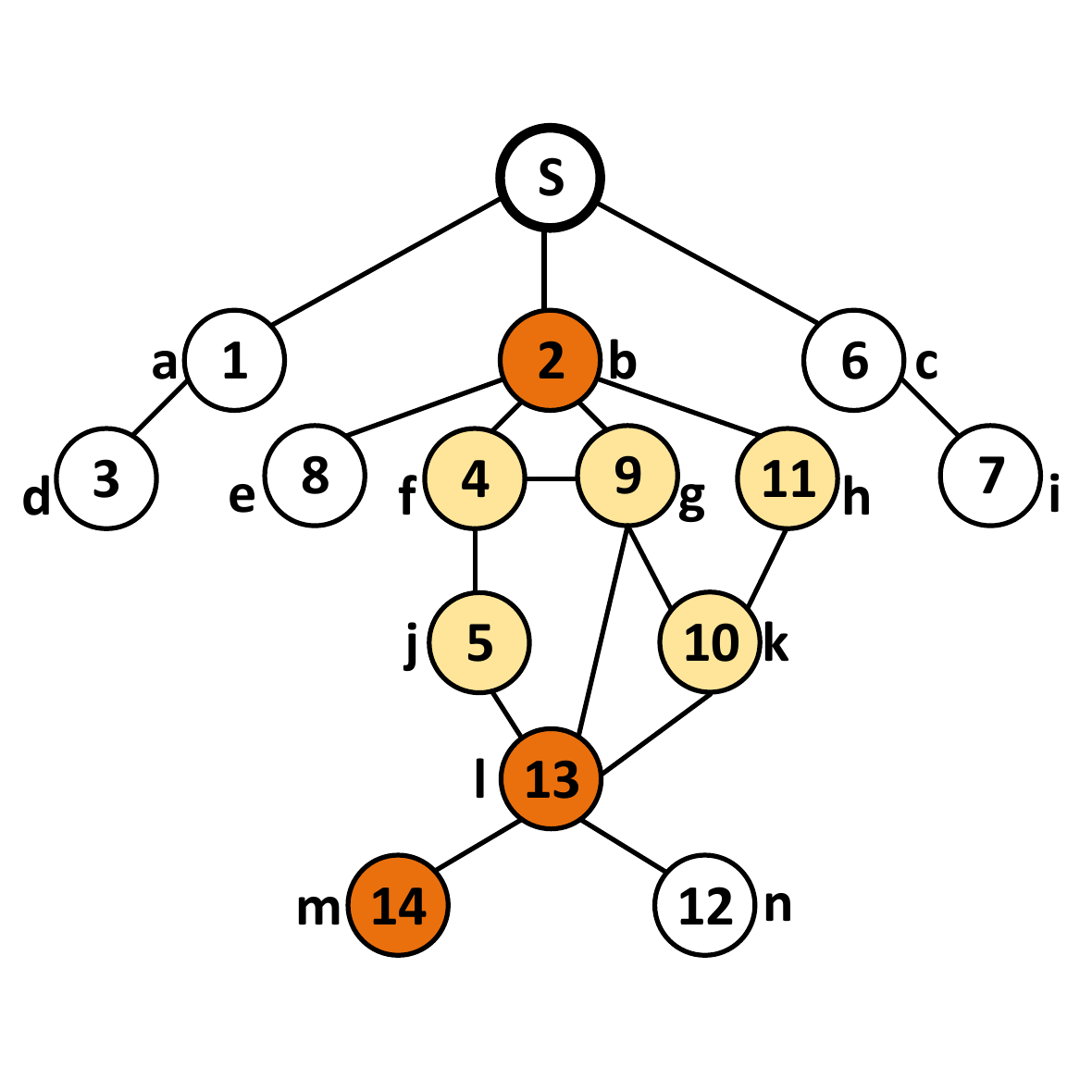}}%
	\subfigure[]{%
		\label{chart:dominant_set}%
		\includegraphics[width=0.5\columnwidth]{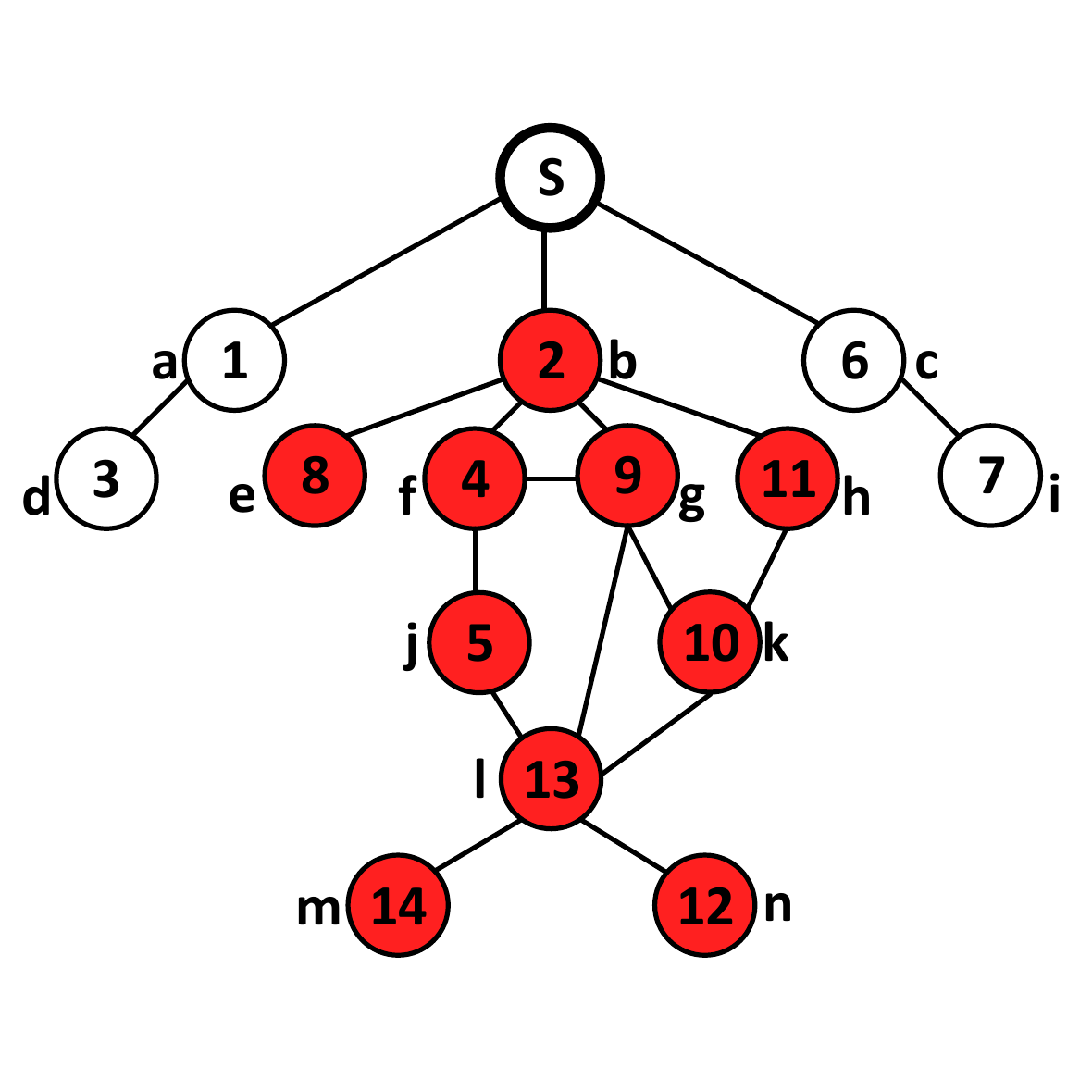}}\\
	\caption{(a) The orange nodes represent the strong critical ancestors of buyer $m$, while the yellow nodes represent the weak critical ancestors of buyer $m$ on the social network; (b) The red nodes represent the critical descendant set of buyer $b$.}
	\label{node_type}
\end{figure}

As shown in Figure~\ref{chart:node_type}, buyer $m$ is the highest bidder with reported valuation $v_m'=14$. Therefore, the strong critical ancestor sequence of $m$ is $C=\{b,l,m\}$ and the weak critical ancestor sets are $M_{bl}=\{h,k,g,j,f\}$ and $M_{lm}=\emptyset$.

Li \textit{et al.} proposed a diffusion mechanism IDM on the social network~\cite{DBLP:conf/aaai/LiHZZ17}. Their mechanism does satisfy the IC property we have defined. However, it only distributes rewards to the winner's strong critical ancestors on the network and ignores the contribution of the winner's weak critical ancestors. Therefore, only a few specific nodes may receive a non-zero utility for diffusing the information.

In contrast, the diffusion rewards in our mechanism are distributed more fairly. Especially, not only strong critical ancestors are rewarded, but also weak critical ancestors who are not cut-points but do diffuse the sale mechanism to the winner are rewarded. Moreover, the seller's revenue under our mechanism is at least as good as that in IDM. 

Our mechanism is defined as follows.

\begin{framed}
	\noindent\textbf{Fair Diffusion Mechanism (FDM)}
	
	\noindent\rule{\textwidth}{0.5pt}
	
	\begin{enumerate}
		\item Given a feasible action profile $a\in \mathcal{F}(\theta)$, find the highest bidder $h\in\arg\max_{i\in N} v_i'$ (with random tie-breaking). Let $v_D^{1^{st}}=\max_{i\in D} v_i'$ be the maximum reported valuation in the subset $D\subseteq N$, and then $v_h'=v_N^{1^{st}}$.  
		Let $g_D^{1^{st}}\in\arg\max_{i\in D}v_{V_i}^{1^{st}}$ (with random tie-breaking) be the strong critical ancestor in the subset $D\subseteq N$ of the highest bidder in $V_D$.
		
		\item Then, the allocation policy can be recursively defined as:
		
		\begin{itemize}
			\item \textbf{Allocation Policy}: 
			\begin{align*}
			\pi_{i}(a)=
			\begin{cases}		
			1& \text{if $i=c_j\in C$, $v_{i}'=v_{N_{-\{c_{j+1}\}\cup{M_{c_jc_{j+1}}}}}^{1^{st}}$}\\& \text{and $\sum_{k\in N_{-i}}\pi_{k}(a)=0$}\\
			0& \text{otherwise}
			\end{cases}
			\end{align*}
		\end{itemize}
		
		\item According to the allocation policy, we can get a winner $c_w\in C$ with $\pi_{c_w}(a)=1$. Then we distribute rewards to the buyers on the strong critical ancestor sequence $\hat{C}=\{c_1,c_2,\cdots,c_w\}$ and the weak critical ancestors $\bigcup_{j=1}^{w-1}M_{c_jc_{j+1}}$. 
		
		\item We have the payment policy defined as:
		
		\begin{itemize}	
			\item \textbf{Payment Policy}: $p_{i}=$
			$$
			\begin{cases}
			v_{N_{-{c_j}}}^{1^{st}}\!-\!v_{N_{-{\{c_{j+1}\}\cup M_{c_jc_{j+1}}}}}^{1^{st}}\!-\!R_{c_j}& \text{if $i\!=\!c_j\!\in\! \hat{C}\!\setminus\!{c_w}$}\\
			v_{N_{-{c_w}}}^{1^{st}}-R_{c_w}& \text{if $i=c_w$}\\
			-R_i& \text{if $i\in M_{c_{j-1}c_{j}}$}\\
			0& \text{otherwise}
			\end{cases}$$
			where $R_i$ is defined as: $R_{i}=$
			$$
			\begin{cases}
			\frac{v_{N_{-\{c_j\}\cup g_{M_{c_{j-1}c_j}}^{1^{st}}}}^{1^{st}}-v_{N_{-{\{c_{j}\}\cup M_{c_{j-1}c_{j}}}}}^{1^{st}}}{|M_{c_{j-1}c_j}|+1}& \text{if $i=c_j\in \hat{C}$}\\
			\frac{v_{N_{-\{i\}\cup \{c_j\}}}^{1^{st}}-v_{N_{-{\{c_{j}\}\cup M_{c_{j-1}c_{j}}}}}^{1^{st}}}{|M_{c_{j-1}c_j}|+1}& \text{if $i\in M_{c_{j-1}c_{j}}$}\\
			0& \text{otherwise}
			\end{cases}$$
		\end{itemize}
	\end{enumerate}
\end{framed}

\begin{figure}[t]
	\centering
	\subfigure[Find the strong critical ancestors.]{%
		\label{chart:FDM_process_1}%
		\includegraphics[width=0.5\linewidth]{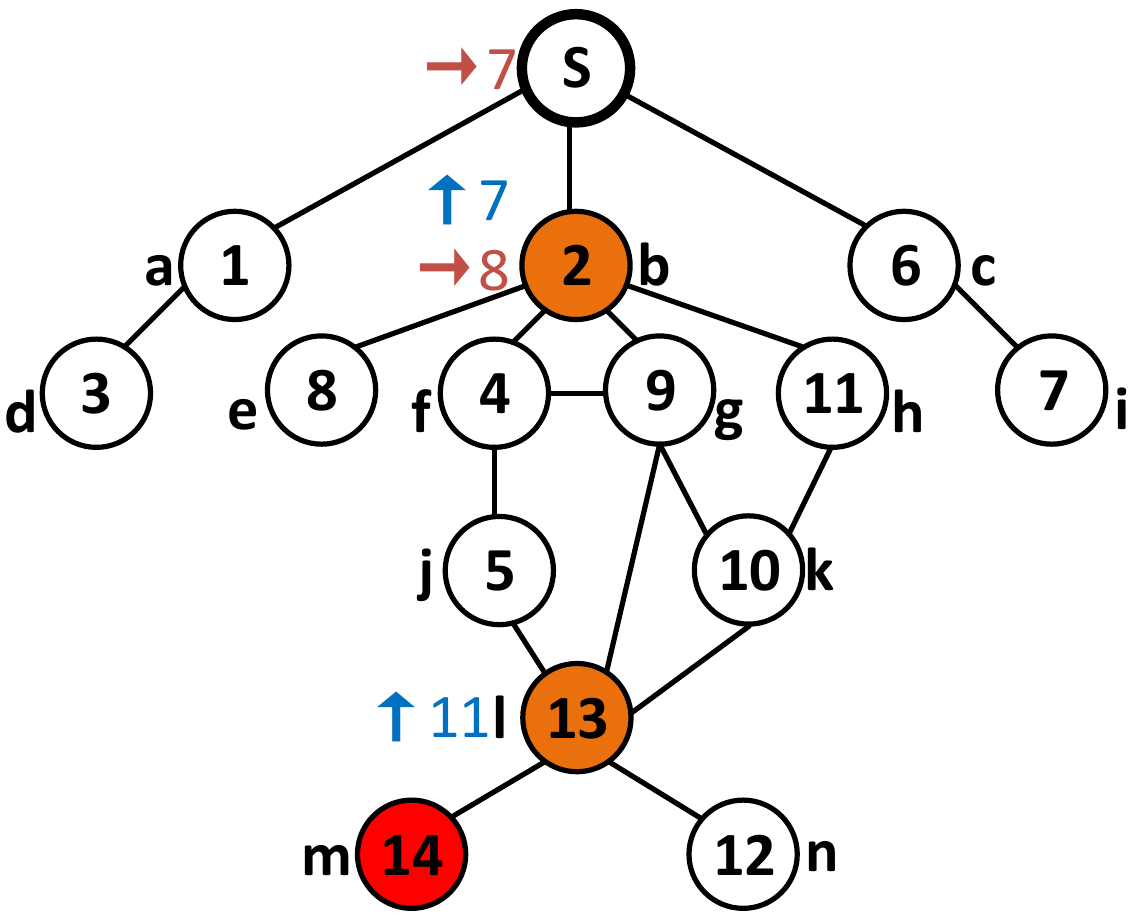}}%
	\subfigure[Redistribute the rewards.]{%
		\label{chart:FDM_process_2}%
		\includegraphics[width=0.5\linewidth]{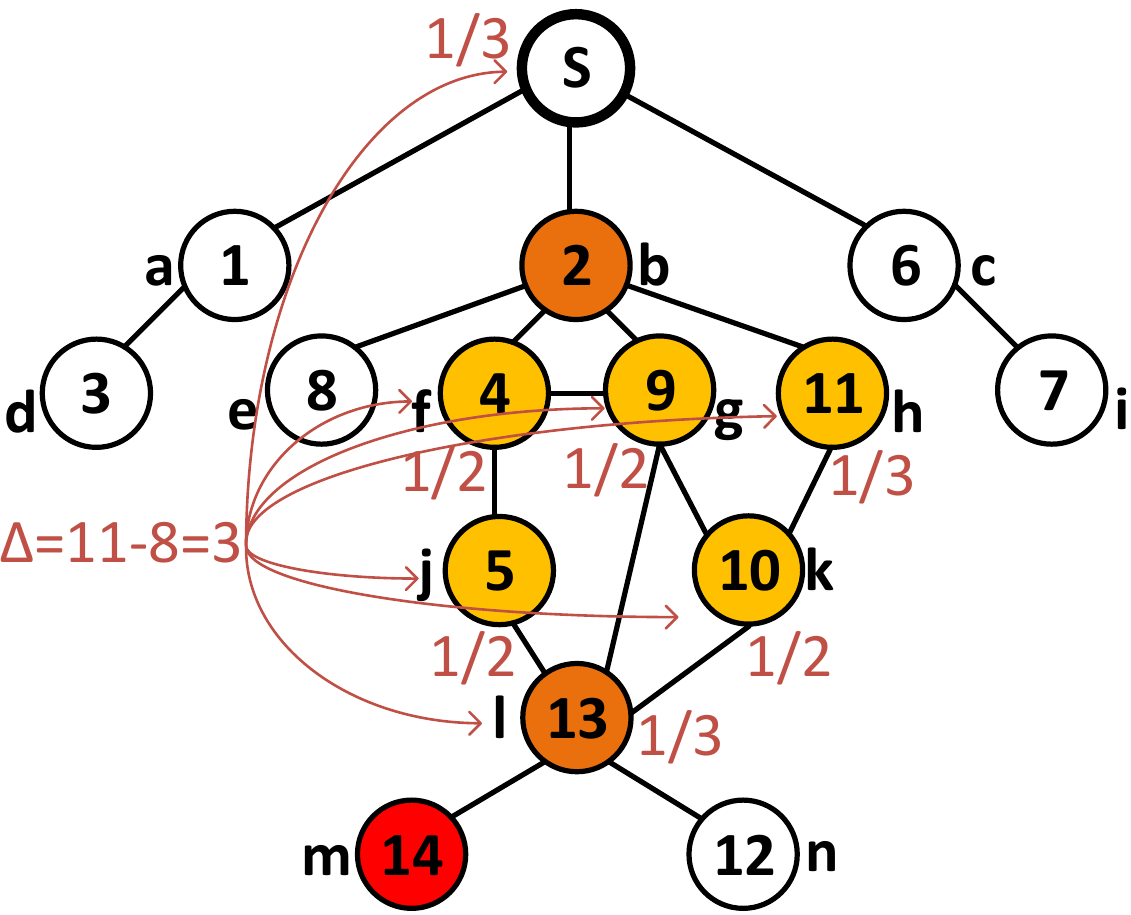}}\\
	\caption{Buyer $m$ is the highest bidder and $l$ is the winner under allocation policy. Orange nodes are the strong critical ancestors of buyer $l$ and yellow nodes are the weak critical ancestors. The difference of what $l$ pays and what $b$ receives will be redistributed among $f$, $g$, $h$, $j$, $k$ and $l$. The remaining part after redistribution will be given to the seller. }
	\label{FDM_process}
\end{figure}

The intuition behind the FDM is that the mechanism allocates the item to the first buyer $c_j$ in the strong critical ancestor sequence of the highest bidder whose reported valuation is the highest among all the buyers if $\{c_{j+1}\}\cup M_{c_jc_{j+1}}$  are not involved in the auction, where $c_{j+1}$ is the next strong critical ancestor and $M_{c_jc_{j+1}}$ is the weak critical ancestor set between $c_j$ and $c_{j+1}$. 

For each strong critical ancestor $c_j\in \hat{C}$, her payment consists of three parts: 
\begin{itemize}
	\item $v_{N_{-c_j}}^{1^{st}}$: the money she paid, which is used to distribute among herself $c_{j}$, the last strong critical ancestor $c_{j-1}$, weak critical ancestors $M_{c_{j-1}c_j}$ and the seller.
	\item $v_{N_{-{\{c_{j+1}\}\cup M_{c_jc_{j+1}}}}}^{1^{st}}$: the money she received from the next strong critical ancestor $c_{j+1}$.
	\item $R_{c_j}$: the reward she received after redistribution.
\end{itemize}

The money paid is the highest reported valuation without her participation and the money received is the highest reported valuation without the participation of $\{c_{j+1}\}\cup M_{c_jc_{j+1}}$. Specially, for the winner, the money she received is zero as there exists no next critical ancestor. 

Since the money paid and the money received between two strong critical ancestors $c_j$ and $c_{j+1}$ are not always equal, the mechanism redistributes the difference to the weak critical ancestors $M_{c_jc_{j+1}}$ and the strong critical ancestor $c_{j+1}$. Inspired by the VCG redistribution mechanism~\cite{Cavallo:2006:ODM:1160633.1160790}, the redistributed reward of buyer $i\in M_{c_jc_{j+1}}\cup \{c_{j+1}\}$ is calculated by the lower-bound of the new difference over all possible reported type of $i$ divided by the number of the buyers sharing the reward, which is $| M_{c_jc_{j+1}}|+1$ in our mechanism. In the single-item setting, the lower-bound will be achieved when $a_i'=nil$, i.e., she does not participate in the mechanism. More concretely, for $c_j\in \hat{C}$, if she quits the mechanism, the new strong critical ancestor among $M_{c_{j-1}c_j}$ is $g_{M_{c_{j-1}c_j}}^{1^{st}}$ whose payment is $v_{N_{-\{c_j\}\cup g_{M_{c_{j-1}c_j}}^{1^{st}}}}^{1^{st}}$. Then the lower-bound of the new difference is $v_{N_{-\{c_j\}\cup g_{M_{c_{j-1}c_j}}^{1^{st}}}}^{1^{st}}-v_{N_{-{\{c_{j}\}\cup M_{c_{j-1}c_{j}}}}}^{1^{st}}$. For $i\in M_{c_{j-1}c_j}$, if she quits the mechanism, the strong critical ancestor $c_j$ will remain the same but her new payment will become $v_{N_{-\{i\}\cup \{c_j\}}}^{1^{st}}$ without $i$'s participation. Then the lower-bound of the new difference is $v_{N_{-\{i\}\cup \{c_j\}}}^{1^{st}}-v_{N_{-{\{c_{j}\}\cup M_{c_{j-1}c_{j}}}}}^{1^{st}}$.

Considering the computational complexity of the mechanism, the allocation and payment for each agent can be calculated by running DFS. Thus, the total time complexity of the mechanism is $O(|V|(|V|+|E|))$, which is the same as the early mechanism.

Based on the social network discussed before, here we give a running example of FDM in Figure~\ref{FDM_process}. Among all the
buyers on the network, buyer $m$ reports the highest valuation with $v_m'=14$. Then $C=\{b,l,m\}$ is the strong critical ancestor sequence of $m$. According to the allocation policy, the item is given to buyer $l$ because $v_l'=v_{N_{- \{m\}\cup M_{lm}}}^{1^{st}}$. Thus, the strong critical ancestor sequence of $l$ is $\hat{C}=\{b,l\}$ and the weak critical ancestor set is $M_{bl}=\{h,k,g,j,f\}$. 

We first consider the strong critical ancestors. For buyer $b$, the money she pays to the seller is $v_{N_{-b}}^{1^{st}}=v_i'=7$ and the money she receives from buyer $l$ is $v_{N_{-\{l\}\cup M_{bl}}}^{1^{st}}=v_e'=8$. Similarly, for buyer $l$, the money she pays is $v_{N_{-l}}^{1^{st}}=v_h'=11$ and she receives nothing since she is the winner. Therefore, the difference between the money paid by buyer $l$ and the money received by buyer $b$ is $\Delta=11-8=3$. Then we redistribute the difference to $M_{bl}$ and $l$, and the number of buyers sharing the reward is $6$. For buyer $l$, if she does not participate in the mechanism, the winner will be $h$ whose payment is $v_k'=10$. Then the difference becomes $\Delta'=10-8=2$. So the reward to $l$ is $R_l=2/6=1/3$. Similarly, for buyer $h$, if she quits the mechanism, the strong critical ancestor is still $l$ but her payment becomes $v_k'=10$. Then the difference will also become $\Delta'=10-8=2$ without $h$'s participation. Thus, we have $R_h=2/6=1/3$. For buyer $f,g,j,k$, the difference will not change if any of them is not involved in the mechanism, so we have $R_f=R_g=R_j=R_k=3/6=1/2$. Till now, $M_{bl}$ and $l$ are redistributed the rewards. The remaining money $\Delta-\sum_{i\in M_{bl} \cup \{l\}} R_i=3-2*1/3-4*1/2=1/3$ will be given to the seller. Then the payment for all the critical buyers is calculated as: $p_b=7-8=-1$, $p_l=11-1/3=32/3$, $p_h=-1/3$ and $p_f=p_g=p_j=p_k=-1/2$. According to the definition of utility, we have $u_b=\pi_{b}(a)v_b-p_b=0-(-1)=1$, $u_l=13-32/3=7/3$, $u_h=0-(-1/3)=1/3$ and $u_f=u_g=u_j=u_k=0-1/2=1/2$. The revenue of the seller is $u_s = p_b+p_l+p_h+p_f+p_g+p_j+p_k= (-1)+32/3+(-1/3)+4*(-1/2) = 22/3$.

\subsection{Comparison between FDM and IDM}

\begin{figure}[t]
	\centering
	\subfigure[A running example of FDM.]{%
		\label{chart:FDM_result}%
		\includegraphics[width=0.5\linewidth]{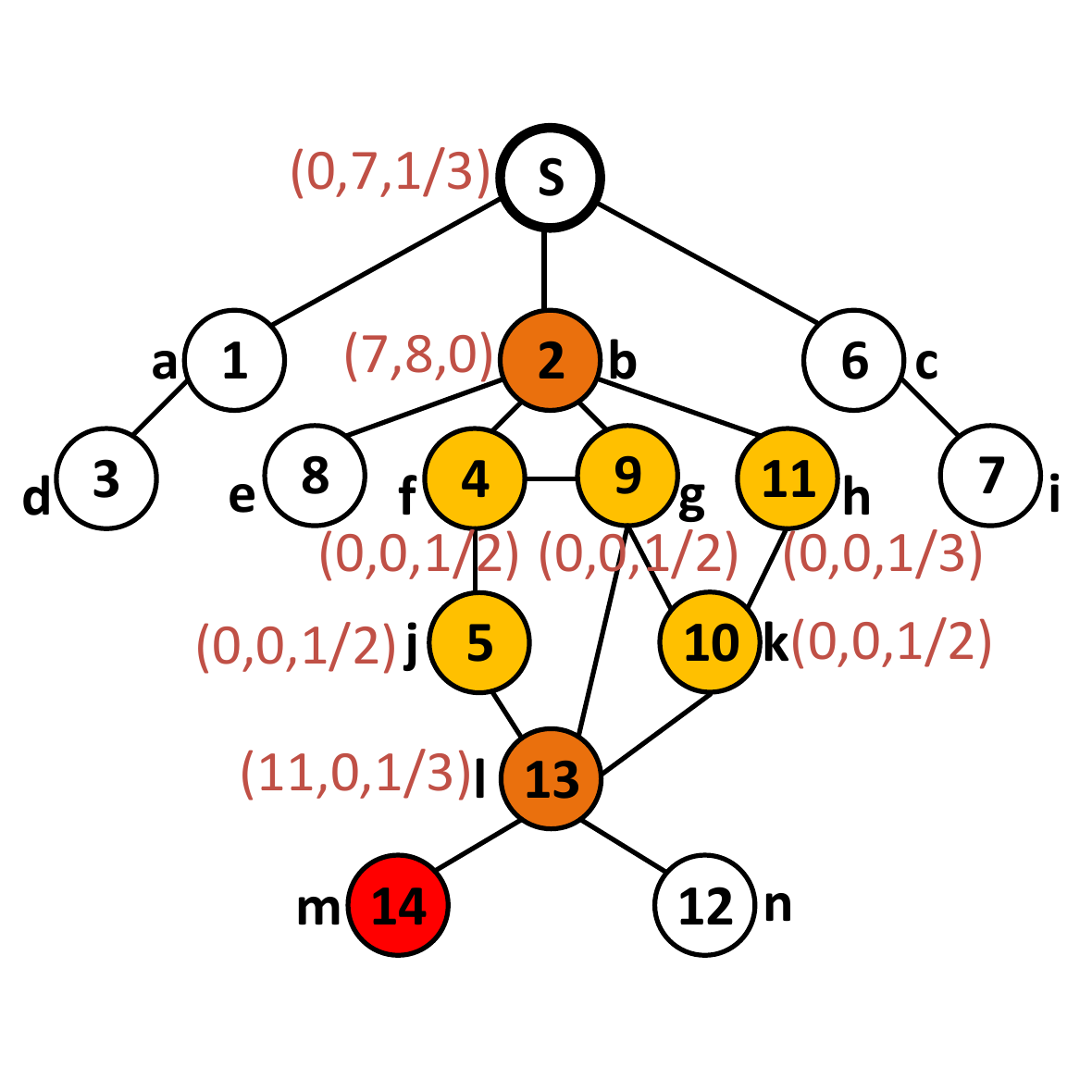}}%
	\subfigure[A running example of IDM.]{%
		\label{chart:IDM_result}%
		\includegraphics[width=0.5\linewidth]{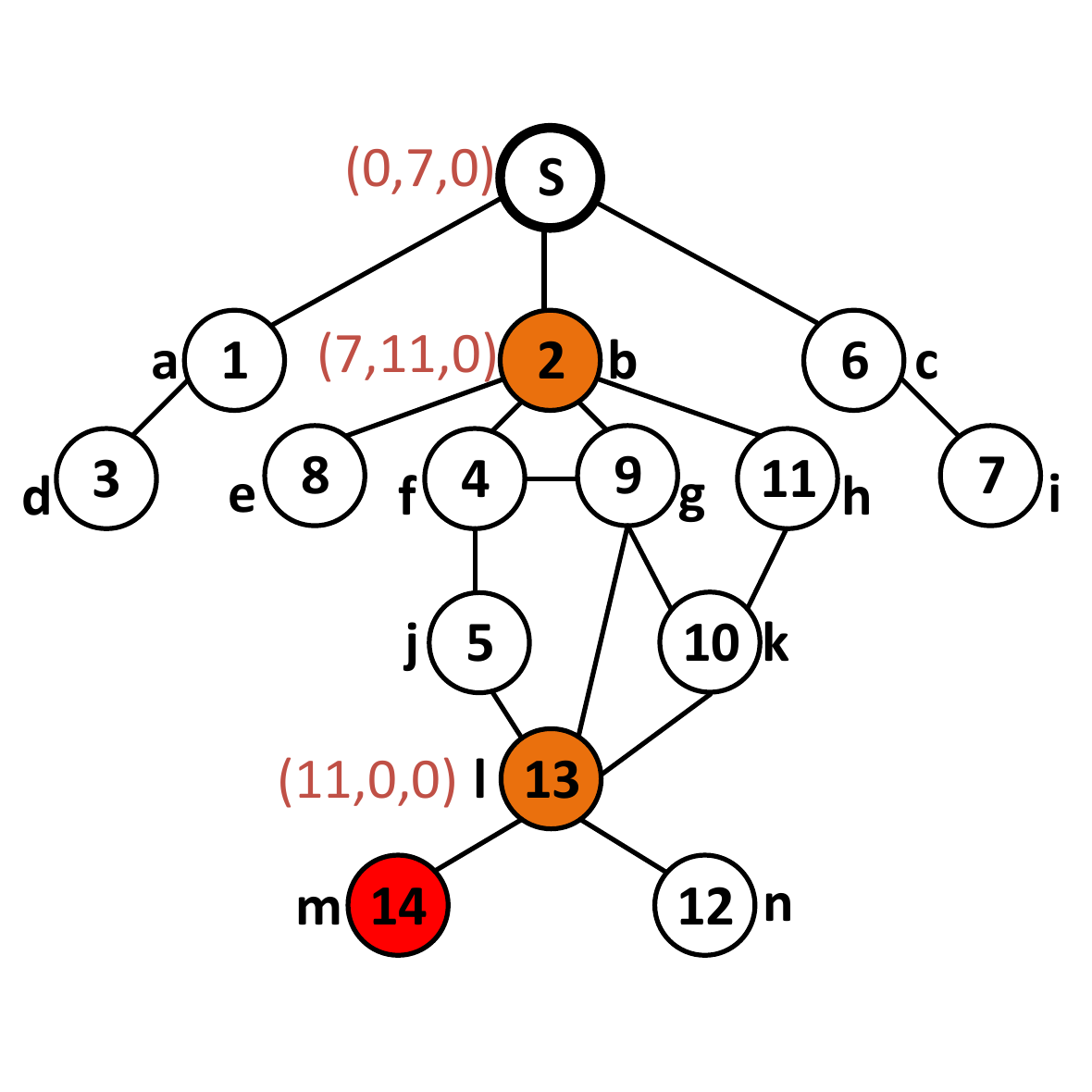}}\\
	\caption{7 critical ancestors in FDM have positive utilities while only 2 strong critical ancestors have positive utilities in IDM. For each buyer, the first number in the vector is the money paid, the second is the money received and the third is the money redistributed.}
	\label{Comparison}
\end{figure}

To show the advantages of our mechanism over the previous related work, here we compare the running result for the same example under IDM and analyze the performance of FDM and IDM.

Compared to the IDM proposed by Li \textit{et al.}~\cite{DBLP:conf/aaai/LiHZZ17}, they only give rewards to those strong critical ancestors. As the running example shown in Figure~\ref{chart:IDM_result} for the same social network, under IDM, buyer $l$ is also the winner with $p_l = 11$ and the utility of strong critical ancestor $b$ is $4$, but all the other buyers on the network will have zero utilities. This is because none of the other buyers is a cut-point to reach $l$ from $s$. In contrast, five more buyers $f,g,h,j$ and $k$, who are also on some simple path from the seller to the winner, gain positive utilities in the same setting under FDM in Figure~\ref{chart:FDM_result}. Although they are not cut-points to reach $l$ from $s$, they can disconnect $l$ from $s$ together. Therefore, FDM also considers their diffusion contribution from this aspect, which is fairer for all the buyers who have made contributions to the sale in the network. Moreover, the seller's revenue under FDM is $22/3$ which is greater than $7$ in IDM.

\begin{table}
	\centering
	\begin{tabular}{c|c|c}
		\toprule[2pt]
		&\textbf{FDM}  &\textbf{IDM}\\ \midrule[1pt]
		winner &        buyer $l$  & buyer $l$\\ 
		social welfare &        13  &  13\\ 
		beneficial buyers  &        $b,f,g,h,j,k,l$  &$b,l$\\ 
		\# of beneficial buyers  &        \textbf{7}  &2\\ 
		beneficial critical ancestor ratio  &        \textbf{1}  &0.29\\
		buyers' total utility  &        5.67  &\textbf{6}\\
		seller's revenue  &        \textbf{7.33}  &7\\
		\bottomrule[2pt]
	\end{tabular}
	\caption{The performance difference of FDM and IDM.}
	\label{tab:comparison}
\end{table}

Table~\ref{tab:comparison} gives an intuitive display for the performance difference between FDM and IDM of the same running example. We can obviously observe that although the winner and the social welfare are the same for the two mechanisms, the number of beneficial buyers in FDM is far greater than that of IDM. Furthermore, the beneficial critical ancestor ratio, i.e., the percentage of positive-utility buyers over all the critical ancestors, is 1 in FDM while 0.29 in IDM. This indicates that all the critical ancestors of the winner in FDM but only a tiny fraction in IDM will be rewarded, which shows the fairness of our mechanism. Moreover, under any network, fixing a maximal value node, if the others' valuation follows some distribution, then all critical ancestors' expected utility in our mechanism is greater than zero, which cannot be achieved by IDM. On the other hand, FDM does not sacrifice the seller's revenue. In spite of the decrease of the buyers' total utility, the seller's revenue under FDM is much higher than IDM with all the desirable properties guaranteed, which encourages the seller more to apply our mechanism.

\section{Properties of FDM}
\label{section:property}
In this section, we will prove that FDM satisfies the properties of IR and IC, and the seller's revenue is at least as good as the revenue under IDM, which is no less than that of traditional VCG among neighbours.

Firstly, we show that all the buyers in our mechanism will not have negative utilities if they report their valuation truthfully.
\begin{thm}
	The fair diffusion mechanism is individually rational.
\end{thm}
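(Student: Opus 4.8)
The plan is to verify that every buyer who reports her valuation truthfully receives non-negative utility, by splitting into cases according to the role the buyer plays in the allocation and payment structure. The utility is $u_i(\theta_i,a)=\pi_i(a)v_i-p_i(a)$, so the task reduces to bounding $p_i(a)$ from above by $\pi_i(a)v_i$ in each case. Buyers with $p_i=0$ (those outside $\hat C$ and the relevant weak critical ancestor sets) have $\pi_i=0$ and hence $u_i=0$, which is immediate. So the real work concerns the strong critical ancestors $c_j\in\hat C\setminus\{c_w\}$, the winner $c_w$, and the weak critical ancestors $i\in M_{c_{j-1}c_j}$.

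First I would handle the weak critical ancestors. For $i\in M_{c_{j-1}c_j}$ we have $\pi_i=0$ and $p_i=-R_i$, so $u_i=R_i$, and it suffices to show $R_i\ge 0$. This amounts to arguing that the numerator $v_{N_{-\{i\}\cup\{c_j\}}}^{1^{st}}-v_{N_{-\{c_j\}\cup M_{c_{j-1}c_j}}}^{1^{st}}$ is non-negative, which follows from monotonicity of the first-price value under shrinking the excluded set: removing fewer buyers (only $\{i\}\cup\{c_j\}$ versus $\{c_j\}\cup M_{c_{j-1}c_j}$, and $i\in M_{c_{j-1}c_j}$) can only weakly increase the surviving maximum valuation. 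The same monotonicity argument gives $R_{c_j}\ge 0$ for the strong-ancestor reward term. Next, for the winner $c_w$ I would show $u_{c_w}=v_{c_w}-p_{c_w}=v_{c_w}-v_{N_{-c_w}}^{1^{st}}+R_{c_w}\ge 0$. Since truthful reporting gives $v_{c_w}'=v_{c_w}$, and the allocation rule selects $c_w$ precisely when $v_{c_w}'=v_{N_{-\{c_{w+1}\}\cup M_{c_wc_{w+1}}}}^{1^{st}}\ge v_{N_{-c_w}}^{1^{st}}$ (because $c_w$ is the highest bidder surviving the removal of her downstream critical block), the term $v_{c_w}-v_{N_{-c_w}}^{1^{st}}$ is already non-negative, and adding $R_{c_w}\ge 0$ preserves this.

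The main obstacle will be the intermediate strong critical ancestors $c_j\in\hat C\setminus\{c_w\}$, who have $\pi_{c_j}=0$ and $p_{c_j}=v_{N_{-c_j}}^{1^{st}}-v_{N_{-\{c_{j+1}\}\cup M_{c_jc_{j+1}}}}^{1^{st}}-R_{c_j}$, so I must prove $p_{c_j}\le 0$, i.e.
\[
v_{N_{-c_j}}^{1^{st}}\;\le\;v_{N_{-\{c_{j+1}\}\cup M_{c_jc_{j+1}}}}^{1^{st}}+R_{c_j}.
\]
The key structural fact is that removing $c_j$ removes her entire critical-descendant set $V_{c_j}$, which strictly contains $\{c_{j+1}\}\cup M_{c_jc_{j+1}}$ together with everything downstream of $c_{j+1}$; hence the "money received" term excludes a smaller set than the "money paid" term. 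I expect to argue that the highest value surviving the removal of only $\{c_{j+1}\}\cup M_{c_jc_{j+1}}$ is at least the highest value surviving the removal of the larger set $V_{c_j}$ (again by excluded-set monotonicity), which already yields $p_{c_j}\le 0$ even before accounting for the non-negative reward $R_{c_j}$. Verifying this monotonicity carefully — in particular confirming that $V_{c_j}\supseteq\{c_{j+1}\}\cup M_{c_jc_{j+1}}$ as a consequence of $c_j$ being a strong critical ancestor preceding $c_{j+1}$ in the sequence $C$ — is the delicate step, since it relies on the precise interaction between the critical-ancestor ordering and the definition of critical-descendant sets; once established, individual rationality for all cases follows.
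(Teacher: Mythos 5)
Your proposal is correct and follows essentially the same route as the paper's own proof: the same case split (zero-payment buyers, weak critical ancestors, winner, non-winner strong critical ancestors), the same excluded-set monotonicity $N_{-c_j}\subseteq N_{-\{c_{j+1}\}\cup M_{c_jc_{j+1}}}$ derived from $V_{c_j}\supseteq V_{\{c_{j+1}\}\cup M_{c_jc_{j+1}}}$, and the same use of the allocation condition $v_{c_w}=v_{N_{-\{c_{w+1}\}\cup M_{c_wc_{w+1}}}}^{1^{st}}$ for the winner. If anything, you are slightly more explicit than the paper about why $R_i\geq 0$, which the paper dismisses as obvious from the definition.
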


\begin{proof}
	After the execution of the FDM, only critical buyers may have non-zero utilities. Since $R_i$ is the redistributed reward, it is obvious that $R_i\geq0$ according to the definition. 
	\begin{itemize}
		\item For buyer $i=c_j\in \hat{C}\setminus c_w$, her utility is $u_{c_j}(a)=\pi_{c_j}(a)v_{c_j}-p_{c_j}=v_{N_{-{\{c_{j+1}\}\cup M_{c_jc_{j+1}}}}}^{1^{st}}-v_{N_{-{c_j}}}^{1^{st}}+R_{c_j}$. Since buyer $c_j$ is ahead of $M_{c_jc_{j+1}}$ on any path from seller to the winner, we have $V_{c_j}\supset V_{\{c_{j+1}\}\cup M_{c_jc_{j+1}}}$ and then $N_{-\{c_{j+1}\}\cup M_{c_jc_{j+1}}}\supset N_{-c_j}$. Thus, we have $u_{c_j}(a)=v_{N_{-{\{c_{j+1}\}\cup M_{c_jc_{j+1}}}}}^{1^{st}}-v_{N_{-{c_j}}}^{1^{st}}+R_{c_j}\geq v_{N_{-{\{c_{j+1}\}\cup M_{c_jc_{j+1}}}}}^{1^{st}}-v_{N_{-{c_j}}}^{1^{st}}\geq0$.
		\item For buyer $i=c_w$, since she is the winner, according to the allocation policy, we have $v_{c_w}=v_{N_{-{\{c_{w+1}\}\cup M_{c_wc_{w+1}}}}}^{1^{st}}$. Then her utility is $u_{c_w}(a)=\pi_{c_w}(a)v_{c_w}-p_{c_w}=v_{c_w}-v_{N_{-{c_w}}}^{1^{st}}+R_{c_w}\geq v_{c_w}-v_{N_{-{c_w}}}^{1^{st}}=v_{N_{-{\{c_{w+1}\}\cup M_{c_wc_{w+1}}}}}^{1^{st}}-v_{N_{-{c_w}}}^{1^{st}}\geq0$.
		\item For buyer $i\in M_{c_{j-1}c_{j}}$, her utility is $u_{i}(a)=\pi_{i}(a)v_{i}-p_{i}=R_i\geq0$.
	\end{itemize}
	
	The payments for all the other buyers are zero. Therefore, the FDM is individually rational. 
\end{proof}

Theorem~\ref{IC} proves that in FDM, all the buyers are incentivized to report their truthful type to the seller, i.e., their truthful valuations and all their neighbours.

\begin{thm}\label{IC}
	The fair diffusion mechanism is incentive compatible.
\end{thm}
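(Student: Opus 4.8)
The plan is to prove incentive compatibility by showing that no buyer $i$ can improve her utility by either misreporting her valuation $v_i$ or by hiding some of her neighbours $r_i$. Since these two deviations are somewhat independent in structure, I would split the argument into two parts. First I would handle the valuation misreport while keeping the reported neighbour set fixed, and then the neighbour (diffusion) misreport. The key structural fact I would rely on throughout is that, under the allocation and payment policies, the portion of a buyer's utility that she can influence is of a VCG-like form: for each buyer, her payment is pinned down by quantities $v_{N_{-X}}^{1^{st}}$ that do not depend on her own reported value $v_i'$ (they are computed on networks with $i$ or some related set removed), plus a redistribution term $R_i$ that is likewise defined from others' reports. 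This is the decisive observation that makes the incentive analysis tractable.

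For the valuation part, I would argue case by case according to the role buyer $i$ plays. The essential point is that whether $i$ wins and what she pays is governed by comparisons of the form $v_i' \gtrless v_{N_{-\{c_{j+1}\}\cup M_{c_jc_{j+1}}}}^{1^{st}}$, exactly as in a second-price/VCG selection. For a strong critical ancestor $c_j$, reporting $v_i' = v_i$ truthfully selects her as winner precisely when doing so is weakly better than passing the item down to the next segment, and since her effective ``price'' to win is independent of $v_i'$, the usual second-price argument shows no over- or under-bidding helps. For a weak critical ancestor $i \in M_{c_{j-1}c_j}$ and for non-critical buyers, $\pi_i(a)=0$ and $p_i = -R_i$, where $R_i$ is computed from the network with $i$ removed, so $R_i$ does not depend on $v_i'$ at all; hence misreporting the valuation cannot change such a buyer's utility. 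I would also need to verify that $i$ cannot, by misreporting $v_i'$, change which segment she belongs to or shift the identity of the winner in a way that benefits her — but because the critical-ancestor structure $C$, $\hat C$, and the sets $M_{c_jc_{j+1}}$ are determined by the graph topology (the $r_i'$ reports) and not by valuations, a pure valuation change leaves this structure intact, so only the winner-selection comparison is at stake, and that is the standard VCG comparison.

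The harder and more interesting part is the diffusion deviation: showing that buyer $i$ cannot gain by reporting $r_i' \subsetneq r_i$, i.e.\ by withholding neighbours. When $i$ hides neighbours, her critical-descendant set shrinks, the downstream buyers that were reachable only through her drop out (they become $nil$, as the IC definition specifies), and this can alter $C$, the membership of the $M$-sets, and even whether $i$ remains a strong versus weak critical ancestor. I would show that hiding neighbours can only (weakly) lower the relevant $v_{N_{-X}}^{1^{st}}$ quantities that feed into $i$'s received money and reward, because removing nodes from the network can only remove candidate bidders from every set $N_{-X}$, so each first-price term weakly decreases. For a strong critical ancestor this means the money she receives from her successor, $v_{N_{-\{c_{j+1}\}\cup M_{c_jc_{j+1}}}}^{1^{st}}$, cannot increase while the money she pays, $v_{N_{-c_j}}^{1^{st}}$, is unaffected by her own diffusion (it already excludes her entire critical-descendant set), so her net position cannot improve; intuitively, suppressing neighbours only throws away potential surplus that would have flowed back to her. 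For weak critical ancestors, I would argue that withholding neighbours can only reduce the ``new difference'' that determines $R_i$, again by the monotonicity of the first-price terms under node removal. The main obstacle I anticipate is the bookkeeping when a diffusion deviation changes $i$'s own classification — for instance, hiding a neighbour could promote $i$ from weak to strong critical ancestor or reroute a segment — and I would need to check that in every such reclassification the monotonicity of the $v_{N_{-X}}^{1^{st}}$ terms still yields a weak utility decrease; establishing this uniformly, rather than by an unwieldy enumeration of topological cases, is where the real work lies, and I would try to phrase it as a single lemma: removing any set of descendants from the network weakly decreases every $v_{N_{-X}}^{1^{st}}$ term appearing in $i$'s utility, while leaving the terms she pays untouched.
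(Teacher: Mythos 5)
Your overall plan coincides with the paper's: split deviations into valuation misreports (neighbours fixed) and diffusion misreports (valuation fixed), analyse by the buyer's role, and exploit the two facts that each buyer's payment is built from terms $v_{N_{-X}}^{1^{st}}$ that exclude her own report and that such terms are monotone under node removal. However, there is a genuine error where you dismiss the reclassification cases for valuation deviations: you assert that $C$, $\hat{C}$ and the sets $M_{c_jc_{j+1}}$ are ``determined by the graph topology and not by valuations.'' That is false. $C$ is the strong critical ancestor sequence of the \emph{highest bidder} $h\in\arg\max_i v_i'$, so the entire segment structure is anchored at a bid-dependent node: any buyer who reports $v_i' > v_N^{1^{st}}$ becomes the highest bidder and re-anchors $C$ (and hence $\hat{C}$ and the $M$-sets) at herself; in particular a weak critical ancestor or a non-critical buyer can make herself the winner this way. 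These are exactly the cases the paper enumerates explicitly: e.g., for a weak critical ancestor $i$ who overbids to win, her utility becomes $v_i - v_{N_{-i}}^{1^{st}} + R_i$, which is below her truthful utility $R_i$ because $h$ is not cut by $i$ alone, so $v_{N_{-i}}^{1^{st}} \geq v_N^{1^{st}} > v_i$. Your fallback, ``this is the standard VCG comparison,'' is the right intuition but does not follow from your (false) structural claim, and in FDM the winner's payment is $v_{N_{-c_w}}^{1^{st}} - R_{c_w}$ rather than a bare second price, so one must additionally argue that the redistribution term is unchanged by the overbid --- a claim about the re-anchored structure that you assumed away.

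On the diffusion half, your monotonicity lemma (hiding neighbours removes only nodes in $V_i$, so the received terms weakly decrease while the paid term $v_{N_{-c_j}}^{1^{st}}$ and reward $R_{c_j}$ are untouched) is indeed the core of the paper's argument \emph{when the deviator keeps her role}. But you explicitly leave the reclassification cases --- a strong critical ancestor becoming the winner, becoming weak, or dropping out of the critical structure entirely --- as ``where the real work lies,'' and these do not follow from the monotonicity lemma alone: when the role changes, the utility changes form, and the needed comparison is of a different kind. For instance, if $c_j$ hides enough neighbours to become the winner, her utility is $v_{c_j} - v_{N_{-c_j}}^{1^{st}} + R_{c_j}$, and bounding this by her truthful utility requires $v_{c_j} < v_{N_{-\{c_{j+1}\}\cup M_{c_jc_{j+1}}}}^{1^{st}}$, which comes from the fact that she failed the winning condition under truthful reporting, not from monotonicity under node removal. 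So the proposal is a correct skeleton that matches the paper's decomposition, but the two hardest families of cases --- valuation-induced and diffusion-induced reclassification --- are respectively dismissed on a false premise and left open, and filling them in is essentially the whole content of the paper's proof.
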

\begin{proof}
	According to the definition of incentive compatibility, we have to prove that for all the buyers in the graph, reporting their truthful valuations for the item and propagating the sale information to all their neighbours is the dominant strategy. Note that we do not consider the collaboration between buyers. More concretely, each buyer $i \in N$ can only cut the edges to the neighbours who also belong to her critical descendant set, which is $r_i\cap V_i$, because those neighbours can only receive the sale information from $i$. However, the other neighbours can receive the information from the seller through other paths to connect $i$, then buyer $i$ cannot cut the edges by herself.
	
	The buyers on the network can be divided into four groups in FDM: 
	
	\begin{enumerate}[(1)]
		\item the non-winner strong critical ancestors $c_j\in \hat{C}\setminus c_w$.
		\item the weak critical ancestors between strong critical ancestors.
		\item the winner $c_w$ who receives the item. 
		\item all the other buyers who are not in group (1), (2) and (3).
	\end{enumerate}
	
	\begin{itemize}
		\item \textbf{For any strong critical ancestor $c_j$ in Group (1)}: 
		\begin{itemize}
			\item If the neighbour set $r'_{c_j}$ reported is fixed, the utility of buyer $c_j$ is defined by $u_{c_j}=\frac{v_{N_{-\{c_j\}\cup g_{M_{c_{j-1}c_j}}^{1^{st}}}}^{1^{st}}-v_{N_{-{\{c_{j}\}\cup M_{c_{j-1}c_{j}}}}}^{1^{st}}}{|M_{c_{j-1}c_j}|+1}+v_{N_{-{\{c_{j+1}\}\cup M_{c_jc_{j+1}}}}}^{1^{st}}-v_{N_{-{c_j}}}^{1^{st}}$, which is not related to her reported valuation $v_{c_j}'$. According to the allocation policy, she cannot misreport valuation to become a weak critical ancestor. If she is still the strong critical ancestor, then the allocation is unchanged. Thus no matter what valuation she reports, her utility remains the same. If she reports a higher valuation to be the winner, the reward redistributed remains the same but her utility will decrease: $u_{c_j}'=v_{c_j}-v_{N_{-{c_j}}}^{1^{st}}+R_{c_j}<v_{N_{-{\{c_{j+1}\}\cup M_{c_jc_{j+1}}}}}^{1^{st}}-v_{N_{-{c_j}}}^{1^{st}}+R_{c_j}=u_{c_j}$. 
			\item If the valuation $v'_{c_j}$ reported is fixed and $r_{c_j}'\not=r_{c_j}$. 
			\begin{itemize}
				\item If she is still the strong critical ancestor, we have $N'_{-\{c_{j+1}\}\cup M_{c_jc_{j+1}}'}\subseteq N_{- \{c_{j+1}\}\cup M_{c_jc_{j+1}}}$, and then $v_{N'_{-{\{c_{j+1}\}\cup M'_{c_jc_{j+1}}}}}^{1^{st}}\leq v_{N_{-{\{c_{j+1}\}\cup M_{c_jc_{j+1}}}}}^{1^{st}}$. Since the money paid $v_{N_{-c_j}}^{1^{st}}$ and the reward redistributed $R_{c_j}$ remains the same, removing neighbours may decrease the money she received.
				\item If she becomes a weak critical ancestor with positive utility, then her utility becomes $u'_{c_j}\leq\frac{v_{N_{-\{c_j\}\cup g_{M_{c_{j-1}c_j}}^{1^{st}}}}^{1^{st}}-v_{N_{-{\{c_{j}\}\cup M_{c_{j-1}c_{j}}}}}^{1^{st}}}{|M_{c_{j-1}c_j}|+1}\leq u_{c_j}$, since $v_{N_{-{\{c_{j+1}\}\cup M_{c_jc_{j+1}}}}}^{1^{st}}-v_{N_{-{c_j}}}^{1^{st}}\geq0$.
				\item If she becomes the new winner, the reward redistributed remains the same and her utility becomes $u_{c_j}'=v_{c_j}-v_{N_{-{c_j}}}^{1^{st}}+R_{c_j}<v_{N_{-{\{c_{j+1}\}\cup M_{c_jc_{j+1}}}}}^{1^{st}}-v_{N_{-{c_j}}}^{1^{st}}+R_{c_j}=u_{c_j}$.
				\item If she is neither a strong critical ancestor nor a weak critical ancestor, her utility $u_{c_j}'=0$.
			\end{itemize}
		\end{itemize} 
		
		\item \textbf{For any weak critical ancestor $i\in M_{c_{j-1}c_j}$ in Group (2)}: 
		\begin{itemize}
			\item If the neighbour set $r'_{i}$ reported is fixed, the utility of buyer $i$ is defined by $u_{i}=R_i=\frac{v_{N_{-\{i\}\cup c_j}}^{1^{st}}-v_{N_{-{\{c_{j}\}\cup M_{c_{j-1}c_{j}}}}}^{1^{st}}}{|M_{c_{j-1}c_j}|+1}$, which is not related to her reported valuation $v_{i}'$. According to the allocation policy, she cannot misreport valuation to become a strong critical ancestor. If the allocation is unchanged, no matter what valuation she reports, her utility remains the same. If she reports a higher valuation to be the winner, the reward redistributed to her remains the same and her utility becomes $u_{i}'=v_{i}-v_{N_{-{i}}}^{1^{st}}+\frac{v_{N_{-\{i\}\cup c_j}}^{1^{st}}-v_{N_{-{\{c_{j}\}\cup M_{c_{j-1}c_{j}}}}}^{1^{st}}}{|M_{c_{j-1}c_j}|+1}$. Since $v_i<v_{N_{-{i}}}^{1^{st}}$, we have $u_i'<u_i$.
			\item If the valuation $v'_{i}$ reported is fixed and $r_{i}'\not=r_{i}$. The utility of buyer $i$ is $u_i=R_i$, which is not related to $i$'s critical descendants. Since removing some neighbours cannot change the allocation and cannot decrease the number of buyers sharing the reward with buyer $i$, misreporting neighbours will not increase the utility.
		\end{itemize} 
		
		\item \textbf{For the winner $c_w$ in Group (3)}: 
		\begin{itemize}
			\item If the neighbour set $r'_{c_w}$ reported is fixed, the utility of buyer $c_w$ is defined by $u_{c_w}=v_{c_w}+\frac{v_{N_{-\{c_w\}\cup g_{M_{c_{w-1}c_w}}^{1^{st}}}}^{1^{st}}-v_{N_{-{\{c_{w}\}\cup M_{c_{w-1}c_{w}}}}}^{1^{st}}}{|M_{c_{w-1}c_w}|+1}-v_{N_{-{c_w}}}^{1^{st}}$, which is not related to her reported valuation $v_{c_w}'$. If the allocation is unchanged, no matter what valuation she reports, her utility remains the same. If she reports a lower valuation to be a weak critical ancestor, her utility becomes $u_{c_w}'\leq\frac{v_{N_{-\{c_w\}\cup g_{M_{c_{w-1}c_w}}^{1^{st}}}}^{1^{st}}-v_{N_{-{\{c_{w}\}\cup M_{c_{w-1}c_{w}}}}}^{1^{st}}}{|M_{c_{w-1}c_w}|+1}\leq u_{c_w}$, since $v_{c_w}-v_{N_{-{c_w}}}^{1^{st}}\geq0$. If she becomes a strong critical ancestor of her critical descendants, her utility will be $u'_{c_w}=v_{N_{-\{c_{w+1}\}\cup M_{c_wc_{w+1}}}}^{1^{st}}+\frac{v_{N_{-\{c_w\}\cup g_{M_{c_{w-1}c_w}}^{1^{st}}}}^{1^{st}}-v_{N_{-{\{c_{w}\}\cup M_{c_{w-1}c_{w}}}}}^{1^{st}}}{|M_{c_{w-1}c_w}|+1}-v_{N_{-{c_w}}}^{1^{st}}<u_{c_w}$ since $v_{c_w}\geq v_{N_{-\{c_{w+1}\}\cup M_{c_wc_{w+1}}}}^{1^{st}}$.
			\item If the valuation $v'_{c_w}$ reported is fixed and $r_{c_w}'\not=r_{c_w}$, the utility of buyer $c_w$ is defined by $u_{c_w}=v_{c_w}+\frac{v_{N_{-\{c_w\}\cup g_{M_{c_{w-1}c_w}}^{1^{st}}}}^{1^{st}}-v_{N_{-{\{c_{w}\}\cup M_{c_{w-1}c_{w}}}}}^{1^{st}}}{|M_{c_{w-1}c_w}|+1}-v_{N_{-{c_w}}}^{1^{st}}$, which is not related to her neighbours $r_{c_w}'$. According to the allocation policy, $v'_{c_w}=v_{N_{-\{c_{w+1}\}\cup{M_{c_wc_{w+1}}}}}^{1^{st}}$. Thus the allocation will not be changed, and no matter what neighbourhood she reports, her utility remains the same.
		\end{itemize} 
		
		\item \textbf{For any other buyer $i$ in Group (4)}: 
		\begin{itemize}
			\item If the neighbour set $r'_{i}$ reported is fixed, the utility of buyer $i$ is zero. If $c_w$ is not her strong critical ancestor, the only way she can gain some benefits is to report a higher valuation to win the item. However, if she reports $v_i'> v_N^{1^{st}} > v_i$, her payment will be the original maximum valuation on the network, which is greater than her truthful valuation. If $c_w$ is her strong critical ancestor, no matter what valuation she reports, the allocation will not be changed.
			\item If the valuation $v'_i$ reported is fixed and $r_{i}'\not=r_{i}$, removing some neighbours will not change the allocation.
		\end{itemize} 
	\end{itemize}
	
	In summary, we can draw the conclusion that the FDM is incentive compatible.
\end{proof}

Then we prove that although our FDM can distribute rewards to more related buyers who have made contributions to the sale, it will not sacrifice the seller's revenue. Actually, it can even improve the seller's revenue compared to the previous work IDM with all the properties guaranteed, which encourages the seller to apply our mechanism.

\begin{thm}
	The seller's revenue under fair diffusion mechanism is always at least as good as the revenue under IDM, which is no less than that of traditional VCG among neighbours.
\end{thm}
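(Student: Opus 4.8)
The plan is to prove the two inequalities in the statement separately. The second one---that IDM's revenue is no less than that of VCG run among the seller's neighbours---is precisely the revenue guarantee established for IDM in~\cite{DBLP:conf/aaai/LiHZZ17}, so I would cite it and devote the argument to the new claim that FDM's revenue dominates IDM's.

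First I would express the seller's revenue in FDM as $u_s=\sum_{i\in N}p_i$ and collect the payment terms. Writing $A_j=v_{N_{-c_j}}^{1^{st}}$ for the amount paid in by a strong critical ancestor $c_j\in\hat C$ and $B_j=v_{N_{-\{c_{j+1}\}\cup M_{c_jc_{j+1}}}}^{1^{st}}$ for the amount refunded to her, the payment policy gives
\begin{align*}
u_s=\Big(\sum_{j=1}^{w}A_j-\sum_{j=1}^{w-1}B_j\Big)-\sum_{i}R_i,
\end{align*}
where the final sum runs over all strong and weak critical ancestors of the winner $c_w$ (note $R_{c_1}=0$). A telescoping rearrangement rewrites the bracket as $A_1+\sum_{j=1}^{w-1}\Delta_j$, where $\Delta_j=A_{j+1}-B_j=v_{N_{-c_{j+1}}}^{1^{st}}-v_{N_{-\{c_{j+1}\}\cup M_{c_jc_{j+1}}}}^{1^{st}}$ is the surplus generated on the segment between $c_j$ and $c_{j+1}$. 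The basic monotonicity fact I would use throughout is that deleting more participants can only shrink the surviving population, so $v_{N_{-S'}}^{1^{st}}\le v_{N_{-S}}^{1^{st}}$ whenever $S\subseteq S'$; this already gives $\Delta_j\ge0$.

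The crux is to show the redistribution never overspends each segment's surplus, i.e. $\sum_{i\in M_{c_jc_{j+1}}\cup\{c_{j+1}\}}R_i\le\Delta_j$. Every reward $R_i$ on this segment carries the common denominator $|M_{c_jc_{j+1}}|+1$, and its numerator equals a first-highest value (taken over a set obtained from $N_{-c_{j+1}}$ by additionally deleting one more agent, namely $g_{M_{c_jc_{j+1}}}^{1^{st}}$ when $i=c_{j+1}$ and $i$ itself otherwise) minus the common term $B_j$. By the monotonicity fact each such first-highest value is at most $v_{N_{-c_{j+1}}}^{1^{st}}=A_{j+1}$, so every numerator is at most $A_{j+1}-B_j=\Delta_j$; since there are exactly $|M_{c_jc_{j+1}}|+1$ equal-denominator recipients, their rewards sum to at most $\Delta_j$. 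Summing over all segments then yields $u_s^{\mathrm{FDM}}\ge A_1+\sum_{j}\Delta_j-\sum_{j}\Delta_j=A_1=v_{N_{-c_1}}^{1^{st}}$.

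Finally I would observe that the same telescoping applied to IDM, whose payment rule refunds exactly $v_{N_{-c_{j+1}}}^{1^{st}}-v_{N_{-c_j}}^{1^{st}}$ to each non-winning cut-point, collapses to $u_s^{\mathrm{IDM}}=v_{N_{-c_1}}^{1^{st}}$ independently of which cut-point wins; since FDM and IDM select the same highest bidder and hence share the same first strong critical ancestor $c_1$, this is the identical quantity $A_1$. Chaining everything gives $u_s^{\mathrm{FDM}}\ge A_1=u_s^{\mathrm{IDM}}\ge u_s^{\mathrm{VCG}}$. I expect the main obstacle to be the per-segment budget-feasibility bound $\sum_i R_i\le\Delta_j$: the rest is bookkeeping, but this inequality is exactly where the Cavallo-style redistribution must be shown to leave a nonnegative remainder for the seller, and it stands or falls on verifying that each single-agent-removed counterfactual value in the numerators is dominated by $v_{N_{-c_{j+1}}}^{1^{st}}$.
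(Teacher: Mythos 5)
Your proposal is correct and follows essentially the same route as the paper: the paper likewise sums the payments to write $u_s^{FDM}=v_{N_{-c_1}}^{1^{st}}+R_s$, where $R_s$ is the money left over after redistribution, and then chains $v_{N_{-c_1}}^{1^{st}}=u_s^{IDM}\geq v_{r_s}^{2^{nd}}=u_s^{VCG}$. The only difference is that the paper dismisses the crucial step with ``it is easy to confirm that $R_s\geq 0$,'' whereas your telescoping decomposition and per-segment budget bound $\sum_{i\in M_{c_jc_{j+1}}\cup\{c_{j+1}\}}R_i\leq\Delta_j$ (via monotonicity of $v_{N_{-S}}^{1^{st}}$ under enlarging $S$) is exactly the verification of that assertion, so your write-up is strictly more complete than the paper's own proof.
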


\begin{proof}
	Given a feasible action profile $a \in \mathcal{F}(\theta)$, the seller's revenue is the sum of the first critical ancestor $c_1$'s payment and all the rewards redistributed to the seller. $R_s$ is the remaining money of the difference which is not redistributed among the critical ancestors. It is easy to confirm that $R_s\geq0$.
	\begin{align*}
		u_s^{FDM}(a,(\pi,p))=&\sum_{i\in N}p_i(a,(\pi,p))\\=&v_{N_{-c_1}}^{1^{st}}+R_s\\\geq& v_{N_{-c_1}}^{1^{st}}
	\end{align*}
	
	While under the IDM, the seller's revenue is defined by $u_s^{IDM}(a,(\pi,p))=v_{N_{-c_1}}^{1^{st}}$. Thus, we have
	\begin{align*} 
		u_s^{FDM}(a,(\pi,p))\geq& v_{N_{-c_1}}^{1^{st}}=u_s^{IDM}(a,(\pi,p))\\\geq& v_{r_s}^{2^{nd}}=u_s^{VCG}(a,(\pi,p))
	\end{align*}

	Therefore, the seller's revenue in FDM is non-negative and at least as good as that in IDM, which is also no less than that in traditional VCG among neighbours.
	
\end{proof}

Since the seller's revenue is the sum of the first strong critical ancestor's payment and the reward redistributed to her, we can easily observe that the seller's revenue in IDM is the lower bound of that in FDM.

\section{Conclusion}
\label{section:con}
In this paper, we propose an advanced diffusion mechanism on social networks. The seller can run the mechanism without paid third-party platforms and gain a higher revenue. Our mechanism guarantees that participating buyers are incentivized to offer their truthful valuations for the item and invite all their neighbours to the sale. All the related critical buyers on some simple path from the seller to the winner will be rewarded for their diffusion effort, which is fairer than other mechanisms proposed in previous work. Moreover, the seller's revenue in our mechanism will not be sacrificed, and is even improved compared to other related work.

On the basis of our work, many other problems are worth further investigation. One direction is to generalize FDM to a more complex setting for multiple items~\cite{Zhao:2018:SMI:3237383.3237400}. Since the item can be passed through the critical ancestors in FDM, it gives us a good chance to study a distributed method to realize our mechanism. What's more, the false-name attack is a difficult problem in mechanism design. False-name attacks also exist in our network setting. We find it also worthwhile to consider the Bayesian Nash equilibrium to maximize the seller's revenue if given a valuation distribution on social networks~\cite{jain2008efficient,hartline2008optimal}. Another valuable future work can be generalizing our mechanism to broader settings such as weighted networks to achieve the same goal~\cite{DBLP:conf/ijcai/LiHZY19}. FDM considers to reward all buyers on the simple paths to reach the winner. What about the others who are not on these paths, but their valuations play an important role to determine the payments? Should they also be rewarded?

\bibliography{ecai}
\end{document}